\title{Graphical Potential Games}
\providecommand{\R}{\mathbb{R}} 
\providecommand{\KL}{\text{KL}} 
\providecommand{\Pa}{\text{Pa}} 
\providecommand{\Nd}{\text{Nd}} 
\DeclareMathOperator*{\argmax}{arg\,max}
\DeclareMathAlphabet{\mathpzc}{OT1}{pzc}{m}{it}
\newtheorem{definition}{Definition}
\newtheorem{theorem}{Theorem}
\newtheorem{lemma}{Lemma}
\newtheorem{corollary}{Corollary}
\newtheorem{proposition}{Proposition}
\newtheorem{property}{Property}
\newtheorem{remark}{Remark}{\bfseries}{\rmfamily}
\newcommand{\indicator}[1]{\mathbbm{1}{\left[ {#1} \right] }}
\author{Luis E. Ortiz\\Department of Computer Science\\Stony Brook
  University\\Stony Brook, NY  11794-4400}
\begin{document}

\maketitle              % typeset the title of the contribution

% LEO: State importance of potential games in game theory via Nobel prize in Economics to Lloyd Shapley: quote award committee 
% LEO: Put graphical models spin and how previous work help a lot to simplify things! 
% LEO: Talk about computational problems and relationship to probabilistic inference
\begin{abstract}
  Potential games, originally introduced in the early 1990's by Lloyd Shapley, the 2012 Nobel Laureate in Economics, and his colleague Dov Monderer, are a very important class of models in game theory. They have special properties such as the existence of Nash equilibria in pure strategies. This note introduces graphical versions of potential games. Special cases of graphical potential games have already found applicability in many areas of science and engineering beyond economics, 
%These areas include artificial intelligence, computer vision, machine learning, neural networks, multi-agent systems, theoretical computer science, computational social science and sociology, and dynamical systems. 
including artificial intelligence, computer vision, and machine learning.
They have been effectively applied to the study and solution of important real-world problems such as routing and congestion in networks, distributed resource allocation (e.g., public goods), and relaxation-labeling for image segmentation.
%, clustering, probabilistic inference in graphical models via learning in games, equilibrium-selection in strategic settings, the emergence of coordination from individual agent's strategic behavior, the rate at which innovations and norms/conventions spread through a social network and the role of the network structure, and inference in Hopfield networks. 
Implicit use of graphical potential games goes back at least 40 years. Several classes of games considered standard in the literature, including coordination games, local interaction games, lattice games, congestion games, and party-affiliation games, are instances of graphical potential games. This note provides several characterizations of graphical potential games by leveraging well-known results from the literature on probabilistic graphical models.  A major contribution of the work presented here that particularly distinguishes it from previous work is establishing that the convergence of certain type of game-playing rules implies that the agents/players must be embedded in some graphical potential game.
%A key distinction from previous work is that 
%The use of 
% to simplify the derivations of the characterizations considerably. 
%Originally motivated by an interest to establish a connection that would allow the formulation of probabilistic inference in classical graphical models as equilibrium computation in a corresponding graphical game, t
\iffalse
Potential games, originally introduced in the early 1990's by Lloyd Shapley, the 2012 Nobel Laureate in Economics, and his colleague Dov Monderer, are a very important class of models in game theory. They have special properties such as the existence of Nash equilibria in pure strategies. Originally motivated by an interest to establish a connection that would allow the formulation of probabilistic inference in classical graphical models as equilibrium computation in a corresponding graphical game, this note introduces graphical versions of potential games and
provides several characterizations.
\fi
%presents results derived from the study of their properties. 
%To do so,
%the paper uses graphical multi-hypermatrix games, a new class of graphical models for game theory that
%generalize both graphical and polymatrix games. 
\end{abstract}
\section{Introduction}

Potential games~\citep{monderer96} have become an inherently important  class of models in game theory. Potential games have special properties such as the existence of Nash equilibria in pure strategies. By now, potential games are so fundamental and core to game theory that their
% theoretical and practical study stands on its own merit.
study is of broad interest.

This note introduces \emph{graphical potential games}, a graphical
version of classical potential games.
%~\citep{monderer96}. 
Implicit use of graphical potential games goes back at least 40 years. Special cases of graphical potential games have already found applicability in many areas of science and engineering beyond economics. These areas include artificial intelligence and
%~\citep{Miller_and_Zucker_1991,russellandnorvig03}
computer vision~\citep{Miller_and_Zucker_1991,Yu_and_Berthod_1995,Berthod_et_al_96}, machine learning~\citep{rezek08}, neural networks~\citep{hopfield82,Miller_and_Zucker_1991}, 
%multi-agent systems, 
theoretical computer science, %~\citep{Montanari09}, 
computational social science and sociology~\citep{Montanari09,Montanari12112010}, and dynamical systems ~\citep{Miller_and_Zucker_1991,zucker01}. They have been effectively applied to the study and solution of important real-world problems such as routing and congestion in networks~\citep{rosenthal73}, distributed resource allocation such as public goods~\citep{heikkinen06}, relaxation-labeling for image segmentation~\citep{rosenfeld76,Miller_and_Zucker_1991,zucker01,Yu_and_Berthod_1995,Berthod_et_al_96}, clustering and probabilistic inference in graphical models via learning in games~\citep{rezek08}, equilibrium-selection in strategic settings~\citep{young93}, the emergence of coordination from individual agent's strategic behavior~\citep{Blume1995111}, the rate at which innovations and norms/conventions spread through a social network and the role of the network structure~\citep{Montanari09,Montanari12112010}, and inference in Hopfield networks~\citep{hopfield82}. (Please refer to the given reference for details.)
Several classes of games considered standard in the literature, including coordination games, local interaction games~\citep{Montanari09,Montanari12112010}, lattice games~\citep{Blume1993387}, congestion games~\citep{rosenthal73}, and party-affiliation games~\citep{fabrikant04}, are instances of graphical potential games.

Potential games, like normal-form games,
do not have an inherently compact, graphical representation, in contrast to graphical games~\citep{kearns01}. Not all potential games are non-trivial graphical games.
%~\footnote{Except for the trivial case in which the graphical game has a complete graph.}
In a graphical game, the payoff function of each player is a function
of its neighbors in the game graph. Any definition of a graphical game
must respect that. But the potential function of a potential game is
\emph{global}: it
involves \emph{all} players! That means that we can only use certain types of potential
functions to define a graphical potential game. 

% The note
% exploits this connection and previous results from the literature in
% probabilistic graphical models to characterize graphical
% potential games. 
This note provides several, strong characterizations of graphical
potential games by leveraging well-known results from the literature
on probabilistic graphical models.  A major contribution of the work
presented here that particularly distinguishes it from previous work
is establishing that the convergence of certain type of game-playing
rules implies that the agents/players must be embedded in some
graphical potential game. At this point, it is best to 
%stay our contributions and 
delay the discussion of the most closely related work in economics and, more recently, computer science until the end of the paper.

\section{Preliminaries}

This section introduces basic notation and concepts in graphical models and game theory.% used throughout the paper.
% It also includes brief statements on current state-of-the-art mathematical and computational results in the area.

\paragraph{Basic Notation.} Denote by $x \equiv (x_1,x_2,\ldots,x_n)$ an $n$-dimensional vector and by $x_{-i} \equiv (x_1,\ldots,x_{i-1},x_{i+1},\ldots,x_n)$ the same vector without component $i$. Similarly, for every set $S \subset [n] \equiv \{1,\ldots,n\}$, denote by $x_S \equiv (x_i : i \in S)$ the (sub-)vector formed from $x$ using only components in $S$, such that, letting $S^c \equiv [n] - S$ denote the complement of $S$, we can denote $x \equiv (x_S,x_{S^c}) \equiv (x_i,x_{-i})$ for every $i$. If $A_1,\ldots,A_n$ are sets, denote by $A \equiv \times_{i \in [n]} A_i$, $A_{-i} \equiv \times_{j \in [n] - \{i\}} A_j$ and $A_S \equiv \times_{j \in S} A_j$.

\paragraph{Graph Terminology and Notation.} Let $G = (V,E)$ be an undirected graph,
%~\footnote{The proposal description will concentrate on graphical models with undirected graphs, but some of what is described is also applicable to those with directed graphs. All graphs are assumed undirected, unless explicitly stated otherwise.} 
with finite set of $n$ {\em vertices\/} or {\em nodes\/} $V = [n]$
%\{1, \ldots, n\}$ 
and a set of (undirected) edges $E$.
% \subset V \times V$ and $(u,v) \in E$ iff $(v,u) \in E$. 
For each node $i$, let $\mathcal{N}(i) \equiv \{ j \mid (i,j) \in E \}$ be the set of neighbors of $i$ in $G$, {\em not including\/} $i$, and $N(i)  \equiv \mathcal{N}(i) \cup  \{i\}$ the set {\em including\/} $i$.
%~\footnote{For simplicity, the quantification sets are not explicitly stated when clear from the context (e.g., using just ``$i$'' instead of ``$i \in V$'').} 
A {\em clique\/} $C$
%\subset V$ of the graph 
of $G$ is a set of nodes with the property that they are all mutually connected: for all $i, j \in C$, $(i,j) \in E$; in addition, $C$ is {\em maximal\/} if there is no other node $k$ outside $C$ that is also connected to each node in $C$, i.e., for all $k \in V - C$, $(k,i) \notin E$ for some $i \in C$.

Another useful concept in the context of this note is that of hypergraphs, which are generalizations of regular graphs. A {\em hypergraph\/} $\mathcal{G} = (V,\mathcal{E})$ is defined by a set of nodes $V$ and a set of {\em hyperedges\/} $\mathcal{E} \subset 2^V$. We can think of the hyperedges as cliques in a regular graph. 
%Indeed, the {\em primal graph\/} of the hypergraph is the graph induced by the node set $V$ and where there is a edge between two nodes if they both belong to the some hyperedge; in other words, the primal graph is the graph induced by taking each hyperedge and forming cliques of nodes in a regular graph.

%Given a directed graph $G^d = (V, E^d)$, denote by $\Pa(i) \equiv \{ j \in V \mid (j,i) \in E^d \} $ the set of {\em parents\/} of node $i$ in the graph (i.e., the set of nodes from which there exists an arc pointing to node $i$ in the graph). 
\iffalse 
A {\em directed acyclic graph (DAG)\/} has no directed cycles (i.e., no directed path, or sequence of edges, $(i_0,i_1),(i_1,i_2),\ldots,(i_{l-1},i_l) \in E^d$ of any length $l > 0$, such that $i_0 = i_l$). For each node $i$ in a DAG, denote by $\Nd(i)$ the set of {\em non-descendants\/} of a node $i$, i.e., the set of nodes not reachable by following a {\em directed path\/} starting from $i$.
\fi

\subsection{Graphical Models}

Graphical models~\citep{kollerandfriedman09} are an elegant marriage of statistics and graph theory that has had tremendous impact in the theory and practice of modern statistics. It has permitted effective modeling of large, structured high-dimensional complex systems found in the real world. The language of graphical models allows us to capture the probabilistic structure of complex interactions between individual entities in the system. The core component of the model is a graph in which each node $i$ corresponds to a random variable $X_i$ and missing edges express \emph{conditional independence assumptions} about those random variables in the probabilistic system. 

\subsubsection{Markov Random Fields, Gibbs Distributions and the Hammersley-Clifford Theorem} 
A joint probability distribution $P$ is called a {\em Markov random field (MRF)\/} with respect to an undirected graph $G$ if for all $x$, for every node $i$,
\(
P(X_i = x_i \mid X_{-i} = x_{-i}) = P(X_i = x_i \mid X_{\mathcal{N}(i)} = x_{\mathcal{N}(i)}).
\)
In that case, the neighbors/variables $X_{\mathcal{N}(i)}$ form the {\em Markov blanket\/} of node/variable $X_i$. 

A joint distribution $P$ is called a {\em Gibbs distribution\/} with respect to a an undirected graph $G$ if 
it can be expressed as 
\(
\textstyle
P(X = x) = \prod_{C \in \mathcal{C}} \Phi_C(x_C)
\)
for some functions $\Phi_C$ indexed by a clique $C \in \mathcal{C}$, the set of all (maximal) cliques in $G$, and mapping every possible value $x_C$ that the random variables $X_C$ associated with the nodes in $C$ can take to a non-negative number. 
%The functions are often called the {\em potential functions}.

% Special kinds of MRF are those with (at most) pairwise local potential functions such as the Ising model. Such MRFs are termed "auto-models" by \citet{besag74}.
Let us say that joint probability distribution $P$ is {\em positive\/} if it has full support (i.e., $P(x) > 0$ for all $x$).
\begin{theorem}
\label{thm:hc}
{\bf [Hammersley-Clifford]}~\citep{hammersley71} Let $P$ be a positive joint probability distribution. Then, $P$ is an MRF with respect to $G$ if and only if $P$ is a Gibbs distribution with respect to $G$.
\end{theorem}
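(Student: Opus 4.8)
The plan is to prove the two implications separately, with the forward direction (Gibbs $\Rightarrow$ MRF) being essentially immediate and the converse (MRF $\Rightarrow$ Gibbs) carrying all the content and requiring the positivity hypothesis. Note that positivity is only needed for the hard direction, since it licenses taking logarithms of $P$.

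For the easy direction I would assume $P(x) = \prod_{C \in \mathcal{C}} \Phi_C(x_C)$ and verify the local Markov property by a direct cancellation. Fixing a node $i$ and writing the conditional $P(X_i = x_i \mid X_{-i} = x_{-i})$ as the ratio of $P(x_i, x_{-i})$ to $\sum_{x_i'} P(x_i', x_{-i})$, every factor $\Phi_C$ with $i \notin C$ is independent of $x_i$ and therefore cancels between numerator and denominator. The surviving factors are exactly those indexed by cliques $C$ containing $i$; since any such clique consists of mutually adjacent nodes that are all adjacent to $i$, we have $C \subseteq N(i)$, so the conditional depends only on $x_{\mathcal{N}(i)}$. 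That is precisely the MRF condition.

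For the hard direction I would build a potential by Möbius inversion (inclusion--exclusion) over the subset lattice. Using positivity, set $Q(x) \equiv \log P(x)$, fix an arbitrary reference configuration $x^* \in A$, and for each $S \subseteq V$ define $\Phi_S \equiv \sum_{T \subseteq S} (-1)^{\abs{S} - \abs{T}} \, Q(x_T, x^*_{T^c})$, where $(x_T, x^*_{T^c})$ is the configuration agreeing with $x$ on $T$ and with $x^*$ everywhere else. By construction each $\Phi_S$ depends only on $x_S$, and Möbius inversion recovers $Q(x) = \sum_{S \subseteq V} \Phi_S(x_S)$. Exponentiating would then write $P(x) = \prod_{S \subseteq V} \exp(\Phi_S(x_S))$ as a product of non-negative functions indexed by subsets; absorbing each surviving subset into a maximal clique that contains it yields the Gibbs form, provided only clique-indexed terms survive.

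The crux --- and the one place where both positivity and the Markov assumption are used --- is the vanishing lemma: $\Phi_S \equiv 0$ whenever $S$ contains a non-adjacent pair $i,j$. I would prove it by grouping the $2^{\abs{S}}$ summands of $\Phi_S$ according to the presence of $i$ and of $j$, writing $q_U \equiv Q(x_U, x^*_{U^c})$. For each $T \subseteq S - \{i,j\}$ the four associated terms combine, up to a common sign, into the bracket $q_T - q_{T \cup \{i\}} - q_{T \cup \{j\}} + q_{T \cup \{i,j\}}$, so the claim reduces to the identity $q_{T \cup \{i\}} - q_T = q_{T \cup \{i,j\}} - q_{T \cup \{j\}}$ --- that toggling coordinate $i$ between $x^*_i$ and $x_i$ changes $Q$ by the same increment regardless of whether coordinate $j$ sits at $x^*_j$ or $x_j$. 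That increment equals $\log\!\big(P(X_i = x_i \mid X_{-i}) / P(X_i = x^*_i \mid X_{-i})\big)$ evaluated at the common values of the remaining coordinates, which the MRF property forces to depend only on $x_{\mathcal{N}(i)}$; since $j \notin \mathcal{N}(i)$, the two increments agree and every bracket cancels. I expect this cancellation argument to be the main obstacle --- not conceptually deep, but requiring careful bookkeeping of the alternating signs and a precise statement that each partial difference of $Q$ depends only on the neighborhood of the toggled coordinate. Once the lemma is established, only clique-indexed potentials remain nonzero, and the Gibbs factorization follows at once.
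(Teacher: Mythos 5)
Your proposal is mathematically sound, but there is nothing in the paper to compare it against: the paper does not prove Theorem~\ref{thm:hc} at all. It imports the Hammersley--Clifford theorem as a known result from the probabilistic graphical models literature and uses it as a black box (in Theorem~\ref{the:gibbspot}, Corollary~\ref{cor:symm}, and the results of Section~\ref{sec:sbr}). What you have written is the classical proof from that cited literature --- essentially Besag's argument: the cancellation direction, then the canonical potential $\Phi_S(x_S) = \sum_{T \subseteq S} (-1)^{\lvert S\rvert - \lvert T\rvert} \log P(x_T, x^*_{T^c})$ built from a reference configuration, Möbius inversion to recover $\log P$, and the vanishing lemma (via the four-term bracket and the Markov property applied to the toggled coordinate $i$ with $j \notin \mathcal{N}(i)$) to kill every non-clique-indexed term before absorbing the survivors into maximal cliques. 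Both directions check out, including the sign bookkeeping. One minor imprecision: your claim that positivity is needed only for the hard direction is not quite right under the paper's definitions. The cancellation in the easy direction divides numerator and denominator by the common factor $\prod_{C \not\ni i} \Phi_C(x_C)$, and it is positivity of $P$ that guarantees every such factor value is nonzero (a zero factor would force $P$ to vanish on some configuration); without it the conditional in the MRF definition need not even be well defined. Since positivity is a standing hypothesis of the theorem, this does not affect correctness, but the parenthetical remark should be softened.
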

%Note that the positive constraint can be drop in some cases. (cite)
%For this reason, 

\noindent In the context of the theorem, the functions $\Phi_C$ are positive, which allows us to define MRFs in terms of {\em local potential functions\/} $\{ \phi_C \}$ over each clique $C$ in the graph. 
%First note that the factorization of Gibbs distribution is not unique. Because we assume that $P$ has full support, the potential functions $\Phi_C$ must be positive, thus we can define them in terms of the exponential function of an arbitrary function $\phi_C$. 
Define the function $\Psi(x) \equiv \sum_{C \in \mathcal{C}} \phi_C(x_C)$. Let us refer to any function of this form as a {\em Gibbs potential\/} with respect to $G$. Thanks to the Hammersley-Clifford Theorem, a more familiar expression of an MRF is
\(
\textstyle
P(X = x) \propto \exp(\sum_{C \in \mathcal{C}} \phi_C(x_C)) = \exp(\Psi(x)) .
\)

\subsection{Game Theory and Graphical Games}

\iffalse
Game theory~\citep{vonneumann47} provides a mathematical model of the stable behavior (or outcome) that may result from the interaction of rational individuals. 
The paper concentrates in {\em noncooperative\/} settings: individuals maximize their {\em own\/} utility, act  {\em independently} and do not have (direct) control over the behavior of others.~\footnote{Individual rationality here 
%is in the sense of von Neumann-Morgenstern utility theory; i.e., 
means that each player seeks to maximize their own utility. Also note that, while many parlor ``win-lose''/zero-sum games involve competition, in general, {\em noncooperative $\neq$ competitive}: each player just wants to do the best for himself, regardless of how useful or harmful his behavior is to others.
%/actions might be to the other players.
}  

The concept of {\em equilibrium\/} is central to game theory. 
%Before delving into the details of the different notions of equilibria, 
Roughly, an equilibrium in a noncooperative game is a point of strategic stance, where no individual player can gain by {\em unilaterally\/} deviating from the equilibrium behavior.
\fi

%\subsubsection{Games and their Representation} 

Let $V = [n]$
%\{1, \ldots, n\}$ 
denote a finite set of $n$ players in a game. For each player $i \in V$, let $A_i$ denote the (finite) set of {\em actions} or {\em pure strategies\/} that $i$ can play. Let $A$
% \equiv \times_{i \in V} A_i$ 
denote the set of {\em joint actions\/}, $x \equiv (x_i, \ldots, x_n) \in A$ denote a joint action, and $x_i$ the individual action of player $i$ in $x$. Denote by $x_{-i}$
% \equiv (x_1, \ldots, x_{i-1}, x_{i+1}, \ldots, x_n)$ 
the joint action of all the players {\em except\/} $i$, such that $x \equiv (x_i, x_{-i})$. Let $M_i : A \to \R$ denote the {\em payoff function\/} of player $i$. 
%If the $A_i$'s are finite, then $M_i$ is called the {\em payoff  matrix} of player $i$. 
%Games represented this way are called {\em normal-\/} or {\em strategic-form games}.

There are a variety of compact representations for large game inspired by probabilistic graphical models in AI and machine learning~\citep{la-mura00,kearns01,koller03,leyton-brown03,jiang08}. This paper's context is 
% focuses in
%in the context of the following 
{\em graphical games}~\citep{kearns01}, a simple but powerful model inspired by probabilistic graphical models such as MRFs.
%~\footnote{Connections have already been established between the different kinds of compact representations \citep{jiang08}, which may facilitate extensions of ideas, frameworks and results to those alternative models.}
 
A \emph{graphical game}~\citep{kearns01} is defined by an undirected graph $(V,E)$ and a set of local payoff hypermatrices $\{M'_i : A_{N(i)} \to \R \mid i  \in V \}$. Each node $i \in V$ in the graph corresponds to a player in the game and its payoff function $M_i(x) \equiv M'_i(x_{N(i)})$ is defined by its local payoff hypermatrix $M'_i$, a function of the actions of the players in the neighborhood $N(i)$ of $i$ only, which includes $i$. A {\em hypergraphical game\/}~\citep{papadimitriou05_ce} is defined by a hypergraph $(V,\mathcal{E})$ and sets of ``local'' payoffs hypermatrices of the form $\{M'_{i,C} : A_C \to \R  \mid C \in \mathcal{E}, i \in C \}$. Let $\mathcal{C}_i \equiv \{ C \in \mathcal{E} \mid  i \in C\}$ be the set of cliques in which $i$ participates. The payoff function of player $i$ in a hypergraphical game is defined as $M_i(x) \equiv \sum_{C \in \mathcal{C}_i} M'_{i,C}(x_C)$. Note that the payoff function of player $i$ only depends directly on the actions of players in its neighborhood $N(i) = \cup_{C \in \mathcal{C}_i} C = \{ j \in V \mid i,j \in C \text{ for some } C \in \mathcal{E}\}$, which includes $i$.
\iffalse
a GMG with the property that if, for all $i$ and $j \in \mathcal{N}(i)$, we have that for all $C \in \mathcal{C}_i$, if $i,j \in C$ then $C \in \mathcal{C}_j$.~\footnote{Hypergraphical games are very close to GMGs. However, their exact definition, as originally given by \citet{papadimitriou05_ce}, is unclear to me. Such games may in fact be exactly GMGs. As a side note, a GMG also has the polynomial expectation property and thus a polynomial correlated equilibrium scheme~\citep{papadimitriou05_ce}, a topic we revisit in Section~\ref{sec:lp}.} That is, in contrast to hypergraphical games, for example, a GMG does not require 
that the same ``sub-game'' 
%(i.e., local-clique payoff matrix) 
be shared among all players in the clique of the ``sub-game.'' 
In fact,
a local-clique payoff hypermatrix involving several players may appear in the summation defining the local payoff hypermatrix of exactly one player.
\fi
A {\em polymatrix game\/}~\citep{janovskaja68} is a hypergraphical game in which $\mathcal{C} = \{ \{i,j\} \mid i,j \in V, j \neq i \}$, which is the set of cliques of pairs of nodes involving the player and every other player. If, instead, the hyperedge set of each player is some (possibly different) {\em subset\/} of $\{ \{i,j\} \mid j \in V, j \neq i \}$, then let us call the game a {\em graphical polymatrix game}. Finally, let us say that a hypergraphical game is {\em hyperedge-symmetric\/} if, in addition, for every hyperedge $C$ containing players $i,j$ in the game, we have that $M'_{i,C} = M'_{j,C} \equiv M'_C$; if, in particular, the hypergraphical game is a graphical polymatrix game, then let us say that the game is {\em pairwise-symmetric}.

\section{Graphical Potential Games}

This section introduces graphical potential games (and other related subclasses) and provides
% a class of games that play a key role in the work presented in the next section. 
some structural properties and characterizations of such games. Previous work in probabilistic graphical models facilitate the derivations of the results.% presented here. 
%Thus, this section serves as a preamble to our pursuit of a stronger connection between inference and equilibria, whose discussion begin in the next section.

\begin{definition}
Consider a graph $G$ with (non-inclusive) neighbor sets $\mathcal{N}(i)$ for each payer $i$. 
For any graphical game with graph $G$, and for each player $i$, consider some function $f_i : \R \times A_{\mathcal{N}(i)} \to \R$.
% assumed to a monotonic function of $v$ for every $x_{\mathcal{N}(i)} \in A_{\mathcal{N}(i)}$.
Let us say that the function $f_i$ is a {\em 
%(conditionally, strictly) monotonically
(conditional) preference-order-preserving transform\/} for player $i$ if $f_i(v,x_{\mathcal{N}(i)})$ is a (strictly) monotonically increasing function of $v$ for every $x_{\mathcal{N}(i)} \in A_{\mathcal{N}(i)}$.~\footnote{\citet{monderer96} considered such transforms in the context of (variants of) fictitious play.} Let us call the transform {\em (unconditionally) linear\/} with respect to some positive weight $w_i$ if it takes the form $f_i(v,x_{\mathcal{N}(i)}) = w_i v$. Let us denote by $f^{-1}_i$ the corresponding {\em (conditional) inverse function\/} of $f_i$; that is, $f_i^{-1} : \R \times A_{\mathcal{N}(i)} \to \R$ such that for all $v \in \R$ and $x_{\mathcal{N}(i)} \in A_{\mathcal{N}(i)} $, we have $f_i^{-1}(f_i(v,x_{\mathcal{N}(i)}),x_{\mathcal{N}(i)})  = v$.
Denote by $f \equiv \{f_i\}$ and $f^{-1} \equiv \{f^{-1}_i\}$ the set
of preference-order-preserving transforms $f_i$ and their inverses,
respectively, one for each player $i$. Let us say that a function
$\Psi$ is a {\em $f$-transformed potential\/} for a graphical game
with neighbor sets $\{\mathcal{N}(i)\}$ and local payoff matrices
$\{M'_i\}$ if for all $i$, $x_{\mathcal{N}(i)}$, and $x_i,x'_i$, \( M'_i(x_i,x_{\mathcal{N}(i)}) - M'_i(x'_i,x_{\mathcal{N}(i)}) = f_i(\Psi(x_i,x_{-i}) - \Psi(x'_i,x_{-i}),x_{\mathcal{N}(i)}) \; .\) Let us call a graphical game with a $f$-transformed potential a {\em graphical $f$-transformed potential game}.
\end{definition}
We can generalize the terminology of \citet{monderer96} to graphical games. Let $w$ be a positive weight vector. If the transformed $f_i$ of each player $i$ is
%defined such that $f_i(v,,x_{\mathcal{N}(i)}) = w_i v$, 
(unconditionally) linear with weight $w_i$, 
then $\Psi$ is called a {\em (weighted) $w$-potential\/} for the game. 
%Let us call a graphical game that admits a (weighted) $w$-potential a {\em graphical $w$-potential game}. 
If $w_i=1$ for all $i$, then $\Psi$ is called an {\em (exact) potential\/}.
% for the game. 
%Let us call a graphical game with a (exact) potential a {\em graphical (exact) potential game}.
Finally, let us say that $\Psi$ is an {\em ordinal potential\/} for a
graphical game with neighbor sets $\{\mathcal{N}(i)\}$ and local
payoff matrices $\{M'_i\}$ if it satisfy the following condition: \( M'_i(x_i,x_{\mathcal{N}(i)}) - M'_i(x'_i,x_{\mathcal{N}(i)}) > 0 \text{ if and only if } \Psi(x_i,x_{-i}) - \Psi(x'_i,x_{-i}) > 0\) for all $i, x_i,x'_i$ and $x_{\mathcal{N}(i)}$. Let us refer to a graphical game with a (weighted) $w$-potential, (exact) potential or ordinal potential as a {\em graphical (weighted) $w$-potential, (exact) potential or ordinal potential game}, respectively.

%It is not hard to see that {\em sequential best-response\/} dynamics, a process in which, at each time step, each player observes the action $x_{-i}$ of others and take an action that maximizes its payoff given that the others played $x_{-i}$, always converges to a PNE in potential games. Hence, potential games always have PNE, in contrast to arbitrary games.

\section{Characterizing Graphical Potential Games}

The following theorem characterizes graphical $f$-transformed potential games: the potential function is the sum of local potential functions over each (maximal) clique in the game graph.
\iffalse
\begin{theorem}
Let $\Psi$ be an $f$-transformed potential of a graphical game and $\mathcal{C}$ be the set of (maximal) cliques in the graph of the graphical game. Then there exist local potential functions $\phi_C : A_C \to \R$ for each $C \in \mathcal{C}$ such that $\Psi(x) = \sum_{C \in\mathcal{C}} \phi_C(x_C)$.
\end{theorem}
\fi
\iffalse
\begin{theorem}
Every graphical $f$-transformed potential game with graph $G$ has a Gibbs potential.
\end{theorem}
\fi
\begin{theorem}
\label{the:gibbspot}
Every potential of a graphical transformed potential game with graph $G$ is a Gibbs potential with respect to $G$.
\end{theorem}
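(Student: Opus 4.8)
The plan is to pass from the game to a probability distribution and invoke the Hammersley--Clifford Theorem (Theorem~\ref{thm:hc}). Concretely, since the action sets are finite and $\Psi$ is real-valued, define the positive joint distribution $P(x) \propto \exp(\Psi(x))$ with normalizing constant $Z \equiv \sum_{x} \exp(\Psi(x)) < \infty$. If I can show that $P$ is an MRF with respect to $G$, then by Theorem~\ref{thm:hc} it is a Gibbs distribution, so $P(x) = \prod_{C \in \mathcal{C}} \Phi_C(x_C)$ for positive factors $\Phi_C$ over the maximal cliques $\mathcal{C}$ of $G$; taking logarithms and absorbing $\log Z$ into any single clique term yields $\Psi(x) = \sum_{C \in \mathcal{C}} \phi_C(x_C)$ with $\phi_C \equiv \log \Phi_C$, which is exactly the claim that $\Psi$ is a Gibbs potential with respect to $G$.

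The crux is therefore establishing the local Markov property of $P$, and the key observation is that the $f$-transformed potential condition pins down the ``local'' differences of $\Psi$. By definition, for each player $i$ and every $x_i, x'_i, x_{\mathcal{N}(i)}$, we have $M'_i(x_i, x_{\mathcal{N}(i)}) - M'_i(x'_i, x_{\mathcal{N}(i)}) = f_i(\Psi(x_i,x_{-i}) - \Psi(x'_i,x_{-i}), x_{\mathcal{N}(i)})$. Because $f_i(\cdot, x_{\mathcal{N}(i)})$ is strictly monotonically increasing, it is injective and admits the inverse $f_i^{-1}(\cdot, x_{\mathcal{N}(i)})$ guaranteed by the definition. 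Applying this inverse to both sides gives $\Psi(x_i, x_{-i}) - \Psi(x'_i, x_{-i}) = f_i^{-1}(M'_i(x_i, x_{\mathcal{N}(i)}) - M'_i(x'_i, x_{\mathcal{N}(i)}), x_{\mathcal{N}(i)})$, whose right-hand side depends only on $x_i$, $x'_i$, and $x_{\mathcal{N}(i)}$. Hence the difference $\Psi(x_i, x_{-i}) - \Psi(x'_i, x_{-i})$ is independent of the coordinates of $x_{-i}$ outside the neighborhood $\mathcal{N}(i)$.

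With this in hand, the local Markov property follows directly. Fixing a reference action $x_i^0 \in A_i$ for each player $i$, I would write the conditional as $P(x_i \mid x_{-i}) = \exp(\Psi(x_i,x_{-i}) - \Psi(x_i^0,x_{-i})) / \sum_{x''_i} \exp(\Psi(x''_i,x_{-i}) - \Psi(x_i^0,x_{-i}))$. Every exponent here is a difference of the form just analyzed, so it depends on $x_{-i}$ only through $x_{\mathcal{N}(i)}$; consequently $P(X_i = x_i \mid X_{-i} = x_{-i}) = P(X_i = x_i \mid X_{\mathcal{N}(i)} = x_{\mathcal{N}(i)})$. Thus $P$ is an MRF with respect to $G$, and the Hammersley--Clifford step completes the argument.

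I expect the only real subtlety --- rather than a genuine obstacle --- to be the justification that $f_i^{-1}$ may legitimately be applied to the payoff difference: one must note that this difference equals a value in the range of $f_i(\cdot,x_{\mathcal{N}(i)})$ by construction, so the left-inverse recovers the potential difference exactly. Everything else (finiteness of $Z$, positivity of $P$, and absorbing the constant $\log Z$ into one clique factor) is routine bookkeeping.
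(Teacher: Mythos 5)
Your proposal is correct and follows essentially the same route as the paper's own proof: define $P(x) \propto \exp(\Psi(x))$, use the inverse transform $f_i^{-1}$ to show that the conditionals $P(X_i = x_i \mid X_{-i} = x_{-i})$ depend on $x_{-i}$ only through $x_{\mathcal{N}(i)}$ (hence $P$ is an MRF), invoke Hammersley--Clifford, and absorb the normalization constant into the clique potentials. The only differences are cosmetic: you normalize differences against a reference action $x_i^0$ where the paper divides by $\exp(\Psi(x_i,x_{-i}))$ directly, and you fold the constant into a single clique term where the paper spreads $c/\abs{\mathcal{C}}$ over all cliques.
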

\begin{proof}
The proof is based on an application of the Hammersley-Clifford Theorem (Theorem~\ref{thm:hc}). Let $\Psi$ be the $f$-transformed potential of the graphical game. Define $P$ as a joint probability distribution such that $P(X=x) \propto \exp(\Psi(x))$ for all $x$. (Thus, $P$ is positive.) The following derivation shows that $P$ is an MRF with respect to the game graph $G$: for all $i, x_i,x_{-i}$,
\begin{align*}
& P(X_i = x_i \mid X_{-i} = x_{-i})\\
 & =
\frac{\exp(\Psi(x_i,x_{-i}))}{\sum_{x'_i} \exp(\Psi(x'_i,x_{-i}))}\\ 
& = \frac{1}{\sum_{x'_i} \exp(\Psi(x'_i,x_{-i}) - \Psi(x_i,x_{-i}))}\\
& = \frac{1}{\sum_{x'_i} \exp(f^{-1}_i(M'_i(x'_i,x_{\mathcal{N}(i)}) - M'_i(x_i,x_{\mathcal{N}(i)}), x_{\mathcal{N}(i)}))} \; .
\end{align*}
Hence, by the Hammersley-Clifford Theorem (Theorem~\ref{thm:hc}), $P$ is also a Gibbs distribution with respect to the graph $G$ of the game. In particular, let $\mathcal{C}$ be the set of (maximal) cliques of $G$. There exist a local potential function $\phi'_C$ for each (maximal) clique $C \in \mathcal{C}$ defining a global potential function $\Psi'$ as $\Psi'(x) = \sum_{C \in \mathcal{C}} \phi'_C(x_C)$ such that $P(X=x) \propto \exp(\Psi'(x))$. Let $Z = \sum_x \exp(\Psi(x))$ and $Z' = \sum_x \exp(\Psi'(x))$ be the normalizing constant when expressing $P$ in terms of $\Psi$ and $\Psi'$, respectively, denote by $c \equiv \ln(Z/Z')$, a constant. Then we have that for all $x$, $\Psi(x) = \Psi'(x) + c$. Defining local potential for $\Psi$ as, for example, $\phi_C(x_C) \equiv \phi'_C(x) + c/|\mathcal{C}|$ completes the proof. %\qed
\end{proof}
\iffalse
% The proof essentially follows from a proof of the Hammersley-Clifford Theorem (Theorem~\ref{th:hc}) due to~\citet{besag74}. 
% \begin{proof}
% The proof follows from the following lemma, which is due to~\citet{besag74}. This particular form of the lemma used here is adapted from the lecture notes of~\citet{pollard??}, who also provide a simple proof, to which the reader if referred.
% \begin{lemma}[\citet{besag74}]
% Let $g$ be a real-valued function on an $n$-dimensional finite Cartesian space $\mathcal{X}$. Let $x' \in \mathcal{X}$ be an arbitrary (but fixed) element of $\mathcal{X}$. For each subset $A \subset [n]$, define
% \[
% g_A(x) \equiv g(y), \text{ where } y_i \equiv 
% \begin{cases} 
% x_i & \text{if } i \in A\\
% x'_i & \text{if } i \notin A
% \end{cases}
% \]
% and
% \[
% \Psi_A(x) \equiv \sum_{B \subset A} (-1)^{|A-B|} g_B(x) \; .
% \]
% Then
% \begin{enumerate}
% \item the function $\Psi_A$ depends on $x$ only through those coordinates $x_j$ such that $j \in A$ (in particular, $\Psi_{\emptyset}$ is a constant);
% \item for $A \neq \emptyset$, if $x_i = x'_i$ for at least one $i$ in $A$ then $\Psi_A(x) = 0$ \; ;
% \item $g(x) = \sum_{A \subset [n]} \Psi_A(x)$.
% \end{enumerate}
% \end{lemma}

% \end{proof}
\fi

In view of the strong equivalence established by the last theorem, let us refer to a graphical transformed potential game as a transformed {\em Gibbs-potential game\/} (with the same graph); and similarly for weighted and exact potentials. 
%As of yet, it is unclear to me 
Whether every graphical ordinal potential game has an (equivalent) ordinal Gibbs potential is left open. So, let us define an {\em ordinal Gibbs-potential game\/} as a graphical game that has a Gibbs potential with respect to the graph of the game.

The following definitions are useful to present the main corollary of the last theorem.
\begin{definition}
Given an $n$-dimensional positive weight vector $w$, let us say that a game with payoff hypermatrices $\{M^1_i\}$ is {\em $w$-scaled payoff-difference equivalent\/} to another game with the same players and payoff hypermatrices $\{M^2_i\}$ if for all $i, x_{-i}, x_i, x'_i$, we have that $M^1_i(x_i,x_{-i}) - M^1_i(x'_i,x_{-i}) = w_i (M^2_i(x_i,x_{-i}) - M^2_i(x'_i,x_{-i}))$. If $w$ is the vector of all $1$'s, then let us simply say that the game is {\em payoff-difference equivalent} to the other.~\footnote{Note that in any two ($w$-scaled) payoff-difference-equivalent games, every player achieves exactly the same ($w$-scaled) expected regrets with respect to any (fixed, possibly correlated) joint-mixed strategy.}% Thus any two such games have the same correlated equilibria.} 
\end{definition}

\begin{definition}
Let us say that a graph has {\em totally disconnected or open neighborhoods\/} if for every node of the graph, every subgraph induced by the neighbors of the node is the empty graph; in other words, there is no edge connecting any pair of neighbors of any node in the graph; formally, if $E$ is the edge set, then for all $i$ and $j, k \in \mathcal{N}(i), j \neq k$, we have $(j,k) \notin E$.%~\footnote{Using more standard graph-theoretic terminology, this is just saying that every (open) neighborhood is the empty graph.} %$j \notin \mathcal{N}(k)$ and $k \notin \mathcal{N}(j)$. 
\end{definition}
Some simple examples of graphs with totally disconnected neighborhoods are trees, cycles, and grids.
\begin{corollary}
\label{cor:symm}
%Any graphical {\em weighted potential\/} game with graph $G$ and Gibbs potential $\Psi$ 
%(defined in terms of the set of local potentials $\phi_C$ for every (maximal) clique $C$ in $G$) 

Any {\em $w$-weighted Gibbs-potential\/} game with graph $G$ and
potential $\Psi$ is $w$-scaled payoff-difference equivalent to a {\em
  hyperedge-symmetric hypergraphical\/} game in which each hyperedge
is a (maximal) clique $C$ in $G$ and has as corresponding hypermatrix
the local potential associated to $C$ in $\Psi$. If, in addition, $G$
has totally disconnected neighborhoods then the equivalence is to {\em
  pairwise-symmetric graphical polymatrix\/} games.~\footnote{See
  Appendix (Missing Proofs) for all missing proofs.}

% Any {\em $w$-weighted Gibbs potential\/} game with graph $G$ and potential $\Psi$
% is $w$-scaled payoff-difference equivalent to a graphical {\em multi-hypermatrix\/} game with graph $G$ and in which (a) each hyperedge in the hypergraph for player $i$ is a (maximal) clique $C$ in $G$ containing $i$, and (b) if $\phi_C$ denotes the local potential associated to $C$ in $\Psi$, the local-clique payoff hypermatrix of $i$ for $C$, previously denoted as $M'_{i,C}(x_C)$, equals $w_i \phi_C(x_C)$.
\end{corollary}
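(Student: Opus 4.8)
The plan is to construct the claimed hyperedge-symmetric hypergraphical game explicitly and then verify the payoff-difference condition clique by clique, using Theorem~\ref{the:gibbspot} to supply the decomposition of the potential. Since a $w$-weighted Gibbs-potential game is in particular a graphical transformed potential game, Theorem~\ref{the:gibbspot} guarantees that its potential $\Psi$ is a Gibbs potential with respect to $G$: there exist local potentials $\phi_C : A_C \to \R$, one per maximal clique $C \in \mathcal{C}$, with $\Psi(x) = \sum_{C \in \mathcal{C}} \phi_C(x_C)$. I would then take as candidate the hypergraphical game whose hyperedge set is exactly $\mathcal{C}$ and whose local-clique hypermatrix for every $C$ and every $i \in C$ is $M'_{i,C} \equiv \phi_C$. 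This game is hyperedge-symmetric by construction, each hyperedge is a maximal clique of $G$ with the associated local potential as its hypermatrix, and the payoff of player $i$ is $M^2_i(x) = \sum_{C \in \mathcal{C}_i} \phi_C(x_C)$, where $\mathcal{C}_i \equiv \{ C \in \mathcal{C} \mid i \in C \}$.

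The crux is the observation that fixing $x_{-i}$ and varying only $x_i$ changes $\phi_C(x_C)$ only for the cliques $C$ that contain $i$. Hence, for all $i, x_{-i}, x_i, x'_i$,
\begin{align*}
\Psi(x_i,x_{-i}) - \Psi(x'_i,x_{-i}) &= \sum_{C \in \mathcal{C}_i} \left( \phi_C(x_i,x_{C - \{i\}}) - \phi_C(x'_i,x_{C - \{i\}}) \right) \\
&= M^2_i(x_i,x_{-i}) - M^2_i(x'_i,x_{-i}),
\end{align*}
since the cliques not containing $i$ cancel in the first difference and never appear in $M^2_i$. Combining this identity with the $w$-potential defining property $M'_i(x_i,x_{\mathcal{N}(i)}) - M'_i(x'_i,x_{\mathcal{N}(i)}) = w_i(\Psi(x_i,x_{-i}) - \Psi(x'_i,x_{-i}))$ of the original game immediately yields $M^1_i(x_i,x_{-i}) - M^1_i(x'_i,x_{-i}) = w_i ( M^2_i(x_i,x_{-i}) - M^2_i(x'_i,x_{-i}) )$, which is exactly $w$-scaled payoff-difference equivalence. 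This disposes of the first assertion.

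For the second assertion I would isolate a short structural lemma: if $G$ has totally disconnected neighborhoods, then every clique of $G$ has at most two nodes. Indeed, a clique $\{i,j,k\}$ on three distinct nodes would force $(j,k) \in E$ with $j,k \in \mathcal{N}(i)$, contradicting the open-neighborhood condition; so no clique of size three (or more) exists. Consequently each maximal clique is either an edge or a singleton, the hyperedges of the game constructed above are all pairs or singletons, and restricting to the pair hyperedges makes each player's hyperedge set a subset of $\{\{i,j\} \mid j \neq i\}$ carrying the shared pairwise matrices $M'_{\{i,j\}} = \phi_{\{i,j\}}$ — that is, a pairwise-symmetric graphical polymatrix game.

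The step I expect to require the most care is the treatment of singleton maximal cliques, which occur precisely at isolated nodes and carry unary potentials $\phi_{\{i\}}(x_i)$ that the strict pair-only definition of a graphical polymatrix game does not literally cover. These terms must be retained, since $\phi_{\{i\}}(x_i)$ enters player $i$'s own payoff differences and is needed for the equivalence at node $i$ itself; but it influences no other player and couples to no neighbor. I would therefore handle it by keeping each such unary term as a degenerate self-payoff matrix (a benign, standard extension of polymatrix games) or, equivalently, by noting that an isolated player's strategic problem decouples entirely; either way the genuinely pairwise interactions are captured exactly and the pairwise-symmetric structure is preserved. Everything else reduces to the routine term-by-term bookkeeping carried out above.
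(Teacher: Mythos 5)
Your proposal is correct and takes essentially the same route as the paper's own proof: invoke Theorem~\ref{the:gibbspot} to write $\Psi$ as a sum of local potentials over the maximal cliques of $G$, take those local potentials as the shared hypermatrices of a hyperedge-symmetric hypergraphical game, verify the payoff-difference identity term by term, and, for the second claim, observe that totally disconnected neighborhoods force every maximal clique to be an edge. Your explicit handling of singleton maximal cliques at isolated nodes is a minor point the paper's appendix glosses over (it asserts the maximal cliques are ``exactly the edges''), but this does not constitute a different argument---only a more careful rendering of the same one.
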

\iffalse
\begin{proof}
The proof follows from the last theorem by applying the respective definitions. For the last statement of the theorem, in particular, note that the maximal cliques of a graph with disconnected neighborhoods are exactly the edges of the graph. Hence, each local potential is pairwise in that case, and the payoff difference is the sum of differences of local pairwise potentials, as is the case in any graphical polymatrix games, by definition.
%\qed
\end{proof}
\fi
It is important to note the implication of the last corollary. In general, there is no reason to expect, {\em a priori}, just from the definition of a graphical potential game with, say for example, a tree graph that the differences in payoff matrix for a player would not be arbitrary functions of the action of the player and those of its neighbors. The corollary tells us that this is not possible in this case: the payoff hypermatrices difference must be \emph{sums} of simple \emph{pairwise} matrices, each being a simple 2-dimensional matrix depending on the actions of the player and one of its neighbors. The same holds for cycles, grids and similar structures, and their corresponding generalizations, including those to higher dimensions.

\iffalse
% \begin{corollary}
% Any graphical (exact) {\em potential\/} game with a graph that has totally disconnected neighborhoods is payoff-difference equivalent to some graphical {\em polymatrix\/} game with the same graph.
% \end{corollary}
\fi

The following proposition completes the connection in the \emph{reverse} direction.
\begin{proposition}
\label{pro:gmg2gibbspg}
Any hyperedge-symmetric hypergraphical game is a graphical Gibbs-potential game.
\end{proposition}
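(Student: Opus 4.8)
The plan is to exhibit an explicit exact potential for the game directly and then check that it has the Gibbs form, without needing to invoke Hammersley--Clifford. First I would unpack the hypotheses. Because the game is hyperedge-symmetric, each hyperedge $C \in \mathcal{E}$ carries a single shared payoff matrix $M'_C$ (with $M'_{i,C} = M'_C$ for every $i \in C$), so that $M_i(x) = \sum_{C \in \mathcal{C}_i} M'_C(x_C)$. I would also record that the game is genuinely a graphical game on the graph $G = (V,E)$ in which $(i,j) \in E$ exactly when $i$ and $j$ co-occur in some hyperedge: since $M_i$ depends only on $x_{N(i)}$ with $N(i) = \cup_{C \in \mathcal{C}_i} C$, one may set $M'_i(x_{N(i)}) \equiv \sum_{C \in \mathcal{C}_i} M'_C(x_C)$ and identify $M_i(x) = M'_i(x_{N(i)})$.

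The central step is to define the candidate potential by summing the shared matrices over \emph{all} hyperedges, $\Psi(x) \equiv \sum_{C \in \mathcal{E}} M'_C(x_C)$, and to verify it is an exact potential. Fixing a player $i$ and comparing $x_i$ against $x'_i$ with $x_{-i}$ held fixed, every term $M'_C$ with $i \notin C$ is unaffected and cancels, so only the hyperedges in $\mathcal{C}_i$ contribute. The symmetry is exactly what makes the surviving difference $\sum_{C \in \mathcal{C}_i} \bigl[ M'_C(x_i, x_{C \setminus \{i\}}) - M'_C(x'_i, x_{C \setminus \{i\}}) \bigr]$ equal to $M_i(x_i,x_{-i}) - M_i(x'_i,x_{-i})$, which yields the exact-potential identity $M'_i(x_i,x_{\mathcal{N}(i)}) - M'_i(x'_i,x_{\mathcal{N}(i)}) = \Psi(x_i,x_{-i}) - \Psi(x'_i,x_{-i})$.

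Next I would show $\Psi$ has Gibbs form with respect to $G$. Each hyperedge $C$ is a clique of $G$, since its members are pairwise adjacent by construction, and hence $C$ is contained in at least one maximal clique. Choosing for each $C \in \mathcal{E}$ one maximal clique $\hat{C}(C) \in \mathcal{C}$ with $C \subseteq \hat{C}(C)$, I would aggregate $\phi_{\hat{C}}(x_{\hat{C}}) \equiv \sum_{C : \hat{C}(C) = \hat{C}} M'_C(x_C)$, setting $\phi_{\hat{C}} \equiv 0$ for any maximal clique receiving no hyperedge. Because each $M'_C(x_C)$ depends only on $x_C$, which is a subvector of $x_{\hat{C}}$, each $\phi_{\hat{C}}$ is a well-defined function of $x_{\hat{C}}$, and $\Psi(x) = \sum_{\hat{C} \in \mathcal{C}} \phi_{\hat{C}}(x_{\hat{C}})$ is a Gibbs potential with respect to $G$. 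Thus the game is a graphical (exact) Gibbs-potential game.

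The only genuinely load-bearing hypothesis is hyperedge symmetry: it is what allows a single global $\Psi$ to simultaneously reproduce the payoff differences of all players sharing a hyperedge. Without it the per-player matrices $M'_{i,C}$ could differ, the cancellation in the second step would leave mismatched terms, and no single summed potential would work. The remaining steps---identifying $G$ and rewriting the hyperedge sum as a sum over maximal cliques---are routine bookkeeping, so I expect the symmetry-driven cancellation to be the conceptual crux rather than any technical difficulty.
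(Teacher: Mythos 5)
Your proof is correct and follows essentially the same route as the paper's: define the potential as the sum of the shared hyperedge payoff matrices, take the game graph to be the primal graph of the hypergraph, and observe the result is a Gibbs potential. You fill in two details the paper leaves implicit---the explicit cancellation argument verifying the exact-potential identity, and the aggregation of hyperedge potentials into maximal cliques (the paper instead simply remarks in a footnote that, going in this direction, local potentials need not sit on maximal cliques)---but the underlying construction is identical.
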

\iffalse
\begin{proof}
For each hyperedge $C$ in the hypergraphical game, define a corresponding local potential $\phi_C(x_C) \equiv M'_C(x_C)$ as the local-clique payoff matrix for $C$. Defining the Gibbs potential using those local potentials and with the graph being the primal graph of the hypergraph, which is an undirected graph with the same vertex set as the hypergraph but where there is an edge between two nodes if there is a hyperedge containing both nodes.~\footnote{Note that in this case, because we are going in the ``reverse'' direction, local potential functions do not need to be over {\em maximal\/} cliques.}
%\qed
\end{proof}
\fi

\section{Smooth Best-Response Play and Graphical Gibbs-Potential Games}
\label{sec:sbr}

Let us consider a {\em sequential process of play} in which there is a pre-specified order by which the players play and at each time step $t$ exactly one player plays by choosing an action $x^t_i$.~\footnote{This process may be relaxed to allow certain dynamic variations in the order of play and some kinds of simultaneous moves. The process is also related to (smooth versions of) the {\em Cournot adjustment process with lock-in\/} and to {\em stochastic adjustment models\/} in the literature on learning in games~\citep{fudenberg99}.} In the sequel, let us take the sequence of play to be, without loss of generality, the sequence $1, 2, \ldots, n$. 
Let us say that a player has a  {\em (time-homogeneous, first-order) Markov playing scheme\/} if it has a (possibly randomized) policy, or plan, by which the agent selects an action based only on the {\em last actions\/} played by the other players; more formally, if the policy $p_i$ is a conditional probability distribution $p_i(x_i \mid x_{-i})$ such that if player $i$ is to play at time $t+1$ and $x^t_{-i}$ are the {\em last\/} joint-actions that player $i$ observed the others take, then player $i$ chooses to play action $x^{t+1}_i = x_i$ with probability $p_i(x_i \mid x^t_{-i})$ (i.e., $x^{t+1}_i \sim p_i(. \mid x^t_{-i})$).~\footnote{Although not pursued in this paper, more complex playing scheme could in principle be considered.} For a graphical game, let us further say that the playing scheme is {\em local\/} if it {\em only\/} depends on the last actions of its {\em neighbors\/} in the game graph. 
\begin{property}
\label{pr:gs}
In a graphical game with graph $G$, the sequential process of play generated by Markov playing schemes of the type described above, and that are local with respect to $G$, is equivalent to realizations generated by running the Gibbs sampler~\citep{geman84} with conditional distributions given by the individual player's playing scheme.
\end{property}
For every round $r = 1, 2, \ldots$ composed of consecutive time steps of length $n$, denote by $z^r = (z^r_1,\ldots,z^r_n)$ the {\em play (joint-action) outcome\/} at round $r$, so that for all $i$, $z^r_i = x^{(r-1)n+i}_i$ (i.e., the joint-action generated during round $r$ by the sequential process of play). Let us refer to the total sequence generated by $z^r$ for $r = 1, 2, \ldots, T$ as the {\em empirical play\/} up to round $T$ (i.e., after $n T$ rounds). Let us say that an empirical play {\em (conditionally) converges\/} starting from an initially assigned play $x^0$ if the empirical joint-probability distribution defined by the empirical play converges (almost surely) to some joint probability distribution as we let the sequential process of play run for an infinite number of rounds (i.e., for every joint-action $x$, if we denote the empirical distribution of play after $T$ rounds as $\widehat{P}^T(x) \equiv \frac{1}{T} \sum_{r=1}^T \indicator{z^r = x}$, then $\widehat{P}^\infty(x) \equiv \lim_{T \to \infty} \widehat{P}^T(x)$ exists, with probability one.) Let us refer to a set of playing schemes, one for each player, as a {\em playing procedure\/} for the game. Furthermore, let us say that a playing procedure is {\em consistent\/} (or {\em globally convergent\/}) if the empirical play generated is convergent to the {\em same\/} 
%(possibly correlated) joint mixed-strategy 
joint distribution
$\widehat{P}^\infty(x)$ from {\em any\/} initial joint play $x^0$.
%In addition, let us say that the playing procedure is {\em consistent\/} if the empirical play it produces is globally convergent.

Let us say that a player $i$ with payoff function $M_i$ uses a {\em smooth best-response (SBR) playing scheme\/} $p_i$ with respect to a (conditional) preference-order-preserving transform 
%$f_i$
$f_i : \R \times A_{-i} \to \R$,
%~\footnote{The set $\R^+ \equiv (0, \infty) = \{ x \in \R \mid x > 0 \}$ is the set of positive real numbers.} 
if for all $x_{-i}$, we have $p_i(x_i \mid x_{-i}) \propto \exp(f_i(M_i(x_i,x_{-i}), x_{-i})$.
If, instead, the $f_i$'s above are such that the ratio $p_i(x_i \mid x_{-i})/p_i(x'_i \mid x_{-i}) = \exp(f_i(M_i(x_i,x_{-i}) - M_i(x'_i,x_{-i}),x_{-i}))$, then let us call the scheme a {\em smooth best-response-difference (SBRD) playing scheme}.~\footnote{This condition implies that $f_i(v,x_{-i}) = - f_i(-v,x_{-i})$ and $p_i(x_i \mid x_{-i}) = 1/(\sum_{x'_i} \exp(f_i(M_i(x'_i,x_{-i}) - M_i(x_i,x_{-i}),x_{-i}))$.}
%A smooth best-response procedure for a game if all the players have a smooth-best response scheme.
Recall from the previous discussion that, for a graphical game, a player's (playing) scheme is {\em local\/} with respect to the game graph, if it only depends on the actions of its neighbors in that graph. In the case of SBR and SBRD schemes, in general, this requires that the domain of the second argument to $f_i$ be $A_{\mathcal{N}(i)}$. Finally, let us say that the game has a {\em (local, Markov) playing procedure\/} if every player uses a (local, Markov) playing scheme.

The following theorems and corollaries provide another characterization of graphical potential games. 
\begin{theorem}
Any graphical game with graph $G$ that has a consistent (Markov) local SBR playing procedure is an ordinal Gibbs-potential game with graph $G$. If, in particular, the transform $f_i$ used by each player $i$ is (unconditionally) linear with weight $1/w_i$, then the game is a $w$-weighted Gibbs-potential game. If the playing procedure is SBRD, instead of SBR, then the game is a $f^{-1}$-transformed Gibbs-potential game, where $f^{-1}$ denote the inverses of the transforms $f$ used by the players for the SBRD scheme.
\iffalse
A graphical transformed potential game has a globally convergent smooth best-response procedure. If a game has a globally convergent smooth best-response procedure then there exists 

A graphical game is an {\em ordinal\/} potential game if and only if it has a globally convergent smooth best-response procedure. Furthermore, the game is a {\em weighted\/} potential game if the function transformations are linear and independent of the other player strategies. In addition, the game is an {\em exact\/} potential game if the functions are constant shifts (i.e., additions or subtractions). 
\fi
\end{theorem}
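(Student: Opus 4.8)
The plan is to convert the dynamical hypothesis into a statement about a stationary distribution and then run, in reverse, exactly the Hammersley--Clifford argument already used for Theorem~\ref{the:gibbspot}. First I would invoke Property~\ref{pr:gs} to identify the empirical play with a Gibbs sampler whose site-conditionals are the players' SBR schemes $p_i(x_i\mid x_{-i})\propto\exp(f_i(M'_i(x_i,x_{\mathcal{N}(i)}),x_{\mathcal{N}(i)}))$. Because an SBR scheme assigns strictly positive probability to every action, a full sweep of the sampler induces a strictly positive transition kernel on the joint-action space, so the round-level chain is ergodic; consistency (global convergence from every initial play $x^0$ to the \emph{same} limit) then identifies $\widehat{P}^\infty$ with the unique, strictly positive stationary distribution $\pi$ of the sweep.

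The heart of the proof is to show that the schemes $p_i$ are precisely the conditionals of $\pi$, i.e. $\pi(x_i\mid x_{-i})=p_i(x_i\mid x_{-i})$ for all $i$ and $x$. A single-site update $T_i$ (resample coordinate $i$ from $p_i(\cdot\mid x_{-i})$, freeze the rest) leaves a distribution $\nu$ invariant if and only if $\nu(x_i\mid x_{-i})=p_i(x_i\mid x_{-i})$; hence it suffices to upgrade the invariance of $\pi$ under the full sweep $T=T_1T_2\cdots T_n$ to invariance under each $T_i$ separately. This is exactly where I expect the main obstacle to sit: for a fixed scan order $\pi$ is a fixed point of the composite $T$ but need not be a fixed point of the individual updates when the $p_i$ are mutually incompatible, and for the first-moving player the conditional of $\pi$ can genuinely differ from $p_i$. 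The way I would discharge this is to exploit that consistent convergence holds from \emph{any} initialization and is insensitive to where a round is deemed to begin: re-anchoring rounds at phase $k$ turns the recorded process into the cyclically shifted sweep $T_{k+1}\cdots T_n T_1\cdots T_k$, whose unique stationary law is $\pi T_1\cdots T_k$; global convergence to one and the same limit forces $\pi T_1\cdots T_k=\pi$ for every $k$, hence $\pi T_i=\pi$ for every $i$, which yields $\pi(x_i\mid x_{-i})=p_i(x_i\mid x_{-i})$.

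With the conditionals pinned down, locality of the SBR schemes does the remaining structural work: since $p_i(x_i\mid x_{-i})$ depends on $x_{-i}$ only through $x_{\mathcal{N}(i)}$, we get $\pi(x_i\mid x_{-i})=\pi(x_i\mid x_{\mathcal{N}(i)})$, so $\pi$ is an MRF with respect to $G$. As $\pi$ is positive, the Hammersley--Clifford Theorem (Theorem~\ref{thm:hc}) supplies local potentials $\{\phi_C\}$ over the (maximal) cliques of $G$ with $\pi(x)\propto\exp(\Psi(x))$ and $\Psi(x)=\sum_{C\in\mathcal{C}}\phi_C(x_C)$; thus $\Psi$ is a Gibbs potential with respect to $G$.

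It remains to check that this $\Psi$ is an (ordinal) potential of the game. I would compute the log-ratio of conditionals two ways: $\ln\frac{\pi(x_i\mid x_{-i})}{\pi(x'_i\mid x_{-i})}=\Psi(x_i,x_{-i})-\Psi(x'_i,x_{-i})$, and $\ln\frac{p_i(x_i\mid x_{-i})}{p_i(x'_i\mid x_{-i})}=f_i(M'_i(x_i,x_{\mathcal{N}(i)}),x_{\mathcal{N}(i)})-f_i(M'_i(x'_i,x_{\mathcal{N}(i)}),x_{\mathcal{N}(i)})$. Equating these and using that $f_i(\cdot,x_{\mathcal{N}(i)})$ is strictly increasing gives $M'_i(x_i,x_{\mathcal{N}(i)})-M'_i(x'_i,x_{\mathcal{N}(i)})>0 \iff \Psi(x_i,x_{-i})-\Psi(x'_i,x_{-i})>0$, so $\Psi$ is an ordinal potential and the game is an ordinal Gibbs-potential game. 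The two refinements are then immediate: if $f_i(v,\cdot)=v/w_i$ the equated identity reads $M'_i(x_i,x_{\mathcal{N}(i)})-M'_i(x'_i,x_{\mathcal{N}(i)})=w_i(\Psi(x_i,x_{-i})-\Psi(x'_i,x_{-i}))$, exhibiting $\Psi$ as a $w$-weighted Gibbs potential; and in the SBRD case the same computation yields $\Psi(x_i,x_{-i})-\Psi(x'_i,x_{-i})=f_i(M'_i(x_i,x_{\mathcal{N}(i)})-M'_i(x'_i,x_{\mathcal{N}(i)}),x_{\mathcal{N}(i)})$, so applying $f_i^{-1}$ exhibits $\Psi$ as an $f^{-1}$-transformed Gibbs potential.
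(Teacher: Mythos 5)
Your proof follows the same skeleton as the paper's: Property~\ref{pr:gs} to view play as a Gibbs sampler, positivity and regularity of the sweep kernel, identification of $\widehat{P}^\infty$ with the unique stationary distribution $\pi$, locality of the schemes giving the MRF property, the Hammersley--Clifford Theorem (Theorem~\ref{thm:hc}) giving the Gibbs potential, and then the log-ratio computation plus strict monotonicity of $f_i$ for the ordinal claim, with the linear and SBRD refinements read off from the same identity. Where you genuinely depart from the paper is at the crux. The paper simply asserts that, because the procedure is consistent, the limiting distribution ``is the (unique) MRF consistent with the conditional probability distributions,'' i.e., that $\widehat{P}^\infty(x_i \mid x_{-i}) = p_i(x_i \mid x_{-i})$. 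You correctly recognize that this is not automatic for a fixed-scan sampler --- invariance under the composite sweep $T_1\cdots T_n$ does not by itself give invariance under each $T_i$ when the $p_i$ are mutually incompatible --- and you supply the missing lemma via the cyclic re-anchoring argument: the limit recorded from phase $k$ is $\pi T_1\cdots T_k$, equality of all these limits forces $\pi T_i=\pi$ for every $i$, and single-site invariance pins down the conditionals. This is a real improvement: it isolates exactly what the consistency hypothesis must deliver and derives the conditional-matching step rather than asserting it.

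One caveat you should make explicit. Under the paper's literal definition, consistency quantifies only over the initial play $x^0$, with the round anchor and the order of play fixed; but for strictly positive schemes the sweep kernel is ergodic, so convergence of the empirical play to one and the same limit from every $x^0$ holds for \emph{any} positive local Markov procedure whatsoever. Read literally, the hypothesis is vacuous, and the theorem would then be false (any graphical game with positive SBR schemes would qualify, including ones admitting best-response cycles). Your re-anchoring step is therefore not a consequence of the stated definition; it is a strengthening of the hypothesis --- the limit must also be insensitive to which player is deemed to start a round, equivalently invariant under cyclic shifts of the play order. That reading is surely the intended one, since the paper fixes the order $1,\ldots,n$ ``without loss of generality,'' and your argument makes precise that this is the minimal content the consistency assumption needs to carry. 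So: same route as the paper, but with the paper's asserted step replaced by an actual argument, at the price of (necessarily) reading the consistency definition more strongly than it is written; you should state that reinterpretation as an explicit assumption rather than folding it into the word ``consistent.''
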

\begin{proof}
As noted in Property~\ref{pr:gs}, the playing procedure is equivalent to the Gibbs sampler. The Markov chain associated with the Gibbs sampler is regular because all the conditional probabilities associated with the players' schemes are positive. Because the procedure is convergent the limiting empirical distribution of play $\widehat{P}$ is the (unique) MRF consistent with the conditional probability distributions. Because $\widehat{P}$ is positive, by the Hammersley-Clifford Theorem (Theorem~\ref{thm:hc}), $\widehat{P}$ is also a Gibbs distribution with respect to the graph of the game $G$. Denote by $\widehat{\Psi}$ the Gibbs potential of $\widehat{P}$.
%, and denote by $\{\widehat{\phi}_C\}$ its corresponding set local potentials. %Denote by $\widehat{\Psi}_{i}(x_i,x_{\mathcal{N}(i)}) \equiv \sum_{C \in \mathcal{C}_i} \widehat{\phi}_C(x_i,x_{C-\{i\}})$ and note that $\widehat{\Psi}(x_i,x_{-i}) - \widehat{\Psi}(x'_i,x_{-i}) = \widehat{\Psi}_{i}(x_i,x_{\mathcal{N}(i)}) - \widehat{\Psi}_{i}(x'_i,x_{\mathcal{N}(i)})$. 
Also, denote by $\widehat{Z}_i(x_{-i}) \equiv \sum_{x'_i} \exp(\widehat{\Psi}(x'_i,x_{-i}))$ the normalizing constant for the marginal distribution over all the variables except $i$ with respect to $\widehat{P}$. Similarly, denote by $Z_i(x_{\mathcal{N}(i)}) \equiv \sum_{x'_i} \exp(f_i(M'_i(x'_i,x_{\mathcal{N}(i)}),x_{\mathcal{N}(i)}))$ the normalizing constant of the player's scheme. The conditionals of MRF $\widehat{P}$ must satisfy the following condition:
\begin{align*}
\frac{\exp(\widehat{\Psi}(x_i,x_{-i}))}{\widehat{Z}_i(x_{-i})} & =
\widehat{P}(X_i = x_i \mid X_{-i} = x_{-i}) \\
& = \widehat{P}(X_i = x_i \mid X_{\mathcal{N}(i)} =
x_{\mathcal{N}(i)})\\
& = p_i(x_i \mid x_{\mathcal{N}(i)}) \\
& = \frac{\exp(f_i(M'_i(x_i,x_{\mathcal{N}(i)}),x_{\mathcal{N}(i)}))}{Z_i(x_{\mathcal{N}(i)})} \; .
\end{align*}
From the last equality, we obtain that for all $x_i,x_{-i}$,
\[
f_i(M'_i(x_i,x_{\mathcal{N}(i)})) = \widehat{\Psi}(x_i,x_{-i}) + \ln(Z_i(x_{\mathcal{N}(i)})/\widehat{Z}_i(x_{-i})) \; .
\]
Because the second term on the right hand side of the last equation does not depend on $x_i$, we can obtain that for all $x_{-i}$ and every pair $x_i,x'_i$, 
\begin{align}
\nonumber f_i(M'_i(x_i,x_{\mathcal{N}(i)}),x_{\mathcal{N}(i)}) -
f_i(M'_i(x'_i,x_{\mathcal{N}(i)}),x_{\mathcal{N}(i)}) \\
\label{eqn:potdiff}
= \widehat{\Psi}(x_i,x_{-i}) - \widehat{\Psi}(x'_i,x_{-i}) \; .
\end{align}
Because the $f_i$'s are (strictly) monotonically increasing with their first argument, we have
\begin{align*}
& \widehat{\Psi}(x_i,x_{-i}) - \widehat{\Psi}(x'_i,x_{-i}) > 0 \\
& \iff  f_i(M'_i(x_i,x_{\mathcal{N}(i)}),x_{\mathcal{N}(i)}) - \\
& f_i(M'_i(x'_i,x_{\mathcal{N}(i)}),x_{\mathcal{N}(i)}) > 0\\
& \iff  f_i(M'_i(x_i,x_{\mathcal{N}(i)}),x_{\mathcal{N}(i)}) > 
f_i(M'_i(x'_i,x_{\mathcal{N}(i)}),x_{\mathcal{N}(i)})\\
& \iff  M'_i(x_i,x_{\mathcal{N}(i)}) > M'_i(x'_i,x_{\mathcal{N}(i)}) \\
& \iff  M'_i(x_i,x_{\mathcal{N}(i)}) - M'_i(x'_i,x_{\mathcal{N}(i)}) > 0 \; .
\end{align*}
This completes the proof of the first statement.
%part follows by the definition of ordinal Gibbs potential games. 

The second statement in the theorem follows by noting that in the case of (unconditionally) linear transforms with weight $w_i$, we have that the left-hand side of Equation~\ref{eqn:potdiff} above becomes
\(
f_i(M'_i(x_i,x_{\mathcal{N}(i)}),x_{\mathcal{N}(i)}) - f_i(M'_i(x'_i,x_{\mathcal{N}(i)}),x_{\mathcal{N}(i)}) = w_i M'_i(x_i,x_{\mathcal{N}(i)}) - w_i M'_i(x'_i,x_{\mathcal{N}(i)}) = w_i (M'_i(x_i,x_{\mathcal{N}(i)}) - M'_i(x'_i,x_{\mathcal{N}(i)})).
\)

The case of SBRD procedure is similar, but simpler:
\begin{align*}
& \exp(f_i(M_i(x_i,x_{\mathcal{N}(i)}) -
M_i(x'_i,x_{\mathcal{N}(i)}),x_{\mathcal{N}(i)}))\\
& = \frac{p_i(x_i \mid x_{\mathcal{N}(i)})}{p_i(x'_i \mid
  x_{\mathcal{N}(i)})}\\
& = \exp(\widehat{\Psi}(x_i,x_{-i}) - \widehat{\Psi}(x'_i,x_{-i})) \; .
\end{align*}
Hence, we have $M_i(x_i,x_{\mathcal{N}(i)}) - M_i(x'_i,x_{\mathcal{N}(i)}) = f^{-1}_i(\widehat{\Psi}(x_i,x_{-i}) - \widehat{\Psi}(x'_i,x_{-i}),x_{\mathcal{N}(i)})$. Using the definition of a graphical transformed potential game completes the proof.%\qed
\end{proof}
% \begin{corollary}
% Consider a graphical game with graph $G$ that has a consistent Markov playing procedure in which each player uses a local SBR scheme defined by the (composed) transform $\exp \circ f_i$, where, for some positive weight vector $w$, $f_i$ is some (conditionally) preference-order-preserving transform such that $f_i(v, x_{-i}) = v/w_i$  (i.e., a linear scaling). Then the game has a $w$-weighted Gibbs potential with respect to graph $G$. 
% \end{corollary}
% \begin{proof}
% The corollary follows from the proof of the last theorem. Note that the composition $\ln \circ ( \exp \circ f_i ) = f_i$. 
% \qed
% \end{proof}

\iffalse
% \begin{corollary}
% Any graphical game with graph $G$ that has a consistent Markov playing procedure in which each player uses a local SBR scheme defined by the (composed) transform $\exp \circ f_i$, where $f_i$ is some (conditionally) preference-order-preserving transform (and $exp$ is the exponential function), is a $f^{-1}$-transformed Gibbs potential game with graph $G$. If, in addition, for some positive weight vector $w$, and every player $i$, the transform $f_i$ is such that $f_i(v, x_{-i}) = w_i v$ (i.e., a linear scaling), then the game has a $w$-weighted Gibbs potential.
% \end{corollary}
\fi
The next proposition completes the characterization of transformed Gibbs-potential games as \emph{exactly} those that have consistent local Markov SBRD playing procedures.
\begin{proposition}
\label{prop:sbr}
Any $f$-transformed Gibbs-potential game with graph $G$ has a consistent Markov SBRD playing procedure that is local with respect to $G$ and each player uses the inverse transforms of $f$ in their playing scheme. If, in particular, the game is a $w$-weighted Gibbs-potential game, then it also has an SBR procedure with the same properties.
\end{proposition}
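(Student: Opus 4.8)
The plan is to run the preceding theorem in reverse: instead of extracting a Gibbs potential from a convergent procedure, I would start from the potential and \emph{construct} the procedure as the associated Gibbs sampler. Let $\Psi$ be the $f$-transformed potential of the game, and define the joint distribution $P(x) \propto \exp(\Psi(x))$; since $\Psi$ is real-valued, $P$ is positive. By Theorem~\ref{the:gibbspot}, $\Psi$ is a Gibbs potential with respect to $G$, so by Hammersley-Clifford (Theorem~\ref{thm:hc}) $P$ is an MRF with respect to $G$, and in particular $P(X_i = x_i \mid X_{-i} = x_{-i}) = P(X_i = x_i \mid X_{\mathcal{N}(i)} = x_{\mathcal{N}(i)})$. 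I would then \emph{define} each player's scheme to be exactly this conditional, $p_i(x_i \mid x_{\mathcal{N}(i)}) \equiv P(X_i = x_i \mid X_{\mathcal{N}(i)} = x_{\mathcal{N}(i)})$, which is by construction a time-homogeneous first-order Markov scheme and local with respect to $G$.

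The second step is to check that this scheme meets the SBRD definition with transform $f_i^{-1}$. Taking the ratio of the conditionals cancels the normalizer and gives $p_i(x_i \mid x_{\mathcal{N}(i)})/p_i(x'_i \mid x_{\mathcal{N}(i)}) = \exp(\Psi(x_i,x_{-i}) - \Psi(x'_i,x_{-i}))$. Applying $f_i^{-1}$ to the defining identity $M'_i(x_i,x_{\mathcal{N}(i)}) - M'_i(x'_i,x_{\mathcal{N}(i)}) = f_i(\Psi(x_i,x_{-i}) - \Psi(x'_i,x_{-i}),x_{\mathcal{N}(i)})$ rewrites this exponent as $f_i^{-1}(M'_i(x_i,x_{\mathcal{N}(i)}) - M'_i(x'_i,x_{\mathcal{N}(i)}),x_{\mathcal{N}(i)})$, which is precisely the SBRD condition; note $f_i^{-1}$ is a legitimate preference-order-preserving transform because $f_i$ is strictly increasing in its first argument, hence so is its inverse. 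For the weighted specialization $f_i(v,\cdot) = w_i v$, the same $P$-conditional is reproduced by the SBR scheme $p_i(x_i \mid x_{\mathcal{N}(i)}) \propto \exp(M'_i(x_i,x_{\mathcal{N}(i)})/w_i)$ with transform $f_i^{-1}(\cdot) = (\cdot)/w_i$, since the $x_i$-independent offset between $M'_i/w_i$ and $\Psi$ drops out after normalization; this yields the promised SBR procedure.

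The final step is consistency. By Property~\ref{pr:gs}, the sequential process of play under these local Markov schemes is a realization of the Gibbs sampler with conditionals $p_i$, whose round-to-round transition on joint actions has $P$ as stationary distribution. Because $P$ is positive, every conditional is strictly positive, so this Markov chain is regular (irreducible and aperiodic); by the ergodic theorem the empirical distribution $\widehat{P}^T$ converges almost surely to the unique stationary distribution $P$ from \emph{any} initial play $x^0$, which is exactly the consistency requirement. I expect the main obstacle to be this last step: the new content and the algebra are short, but one must be careful that the systematic-scan sampler's stationary distribution really is $P$ and that almost-sure convergence of the empirical distribution holds from every initialization --- both of which hinge on the positivity of $P$ that guarantees regularity, mirroring the argument already used in the forward theorem.
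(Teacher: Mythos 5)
Your proof is correct and takes essentially the same approach as the paper's: you construct the local Markov scheme whose conditional ratios equal $\exp(\Psi(x_i,x_{-i})-\Psi(x'_i,x_{-i}))$, identify it as an SBRD scheme with transforms $f^{-1}$ via the defining identity of the transformed potential, and obtain consistency from Property~\ref{pr:gs} together with convergence of the Gibbs sampler guaranteed by positivity and the Hammersley--Clifford Theorem, treating the $w$-weighted/SBR case in the same way as the paper. The only cosmetic difference is the direction of the construction: you define the scheme from the MRF conditionals of $P(x)\propto\exp(\Psi(x))$ and then verify the SBRD ratio condition, whereas the paper defines the scheme directly by the SBRD ratios and then identifies them with the potential differences.
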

\iffalse
\begin{proof}
Let $\Psi$ be the Gibbs potential of the game. Define the SBRD playing scheme for each player using the inverse $f^{-1}_i$ of the corresponding game transform, such that
\begin{align*}
\frac{p_i(x_i \mid x_{\mathcal{N}(i)})}{p_i(x'_i \mid x_{\mathcal{N}(i)})} & \equiv \exp(f^{-1}_i(M_i(x_i,x_{\mathcal{N}(i)}) - M_i(x'_i,x_{\mathcal{N}(i)}),x_{\mathcal{N}(i)})) \\
& = \exp(\widehat{\Psi}(x_i,x_{-i}) - \widehat{\Psi}(x'_i,x_{-i})) \; 
\end{align*}
where the second equality follows from the definition of transformed potential games. The scheme is local by construction. Because $\Psi$ is a Gibbs potential, by the Hammersley-Clifford Theorem (Theorem~\ref{thm:hc}), there is a positive joint (global) MRF consistent with the conditional distributions induced by the scheme. Hence, running the Gibbs sampler will always converge, regardless of initial conditions. Because the playing procedure is equivalent to running the Gibbs sampler (Property~\ref{pr:gs}), the corresponding empirical distribution of play will always converge to the same Gibbs distribution with potential $\Psi$. Thus, the procedure will be consistent.

In the special case of a $w$-weighted Gibbs-potential game, then each player $i$'s scheme takes the form
\[
p_i(x_i \mid x_{\mathcal{N}(i)}) = \frac{\exp(M_i(x_i,x_{\mathcal{N}(i)})/w_i)}{\sum_{x'_i} \exp(M_i(x'_i,x_{\mathcal{N}(i)})/w_i)}
\]
which corresponds to an SBR with a (unconditionally) linear transform with weight $1/w_i$.
%\qed
\end{proof}
\fi
Whether the last theorem extends to graphical {\em ordinal\/} potential games is left open.

\paragraph{Closing Remarks on Related Work.} A connection, obtained independently from the one in this paper, between potential games and MRFs appeared in previous
work in economics and game
theory~\citep{ellison93,Blume1993387,kandori93,young93,Blume1995111},
and more recently in the theoretical CS
community~\citep{Montanari09,Montanari12112010}. 

In all that previous
work, the interest and goal for making the connection is
different than the one here. Roughly speaking, their connection was via very specific ``game-playing
rules'' (e.g., the logit rule and its limiting version, the so called
myopic/deterministic/strict best-response rule) on very specific
classes of games (e.g.,
coordination games). Their interest was the nature of the steady state
induced by playing some
specific-type of rule in some specific kind of game, and sometimes
also establishing bounds on the convergence rate to steady state
behavior~\citep{Montanari09,Montanari12112010}. That previous work showed that the steady state of running such a rule in
a specific type of graphical game called a \emph{local interaction game} is an MRF with respect to the graph
of the specific type of game. 

The results presented here are more general in the type of game-playing
procedures considered for the characterization. This work does not study
limiting cases such as (myopic/deterministic/strict) best-response,
nor does it study convergence rates. 
\begin{center}
{\em No previous work provides a connection in
the other direction: if certain type of game-playing rule converges,
then the
players must be embedded in some
graphical \emph{potential} game.} 
\end{center}

From a technical perspective, by invoking known results from
probabilistic graphical models (e.g., Hammersley-Clifford Theorem~\citep{hammersley71,besag74} and
convergence of the Gibbs sampler~\citep{geman84}), the derivation of our results
simplified considerably. Interestingly, the proofs for the previous
results appear to be specific versions of those typically used to prove
the ``hard'' direction of
the Hammersley-Clifford Theorem.~\footnote{Also, in some cases, the proofs in previous work use more
sophisticated mathematical tools because they want to understand what
happens to the unique stable distribution as the best-response becomes
deterministic in the limit. They use a dynamical-system
  framework and apply tools such as the Strong Ergodic Theorem to
  prove the existence of the limit of the stable distributions, as the
  best-response rule becomes more myopic/deterministic/strict. The typical reference
  given is~\citet{karlin75}.}

\appendix

\section{Appendix: Missing Proofs}
\label{app:proofs}

This appendix contains all the proofs left out of the main body of the paper.

\subsection{Proof of Corollary~\ref{cor:symm}}

The proof follows from Theorem~\ref{the:gibbspot} by applying the respective definitions. For the last statement of the corollary, in particular, note that the maximal cliques of a graph with disconnected neighborhoods are exactly the edges of the graph. Hence, each local potential is pairwise in that case, and the payoff difference is the sum of differences of local pairwise potentials, as is the case in any graphical polymatrix games, by definition.

\subsection{Proof of Proposition~\ref{pro:gmg2gibbspg}}

For each hyperedge $C$ in the hypergraphical game, define a corresponding local potential $\phi_C(x_C) \equiv M'_C(x_C)$ as the local-clique payoff matrix for $C$. Defining the Gibbs potential using those local potentials and with the graph being the primal graph of the hypergraph, which is an undirected graph with the same vertex set as the hypergraph but where there is an edge between two nodes if there is a hyperedge containing both nodes.~\footnote{Note that in this case, because we are going in the ``reverse'' direction, local potential functions do not need to be over {\em maximal\/} cliques.}

\subsection{Proof of Proposition~\ref{prop:sbr}}

Let $\Psi$ be the Gibbs potential of the game. Define the SBRD playing scheme for each player using the inverse $f^{-1}_i$ of the corresponding game transform, such that
\begin{align*}
& \frac{p_i(x_i \mid x_{\mathcal{N}(i)})}{p_i(x'_i \mid x_{\mathcal{N}(i)})} \\
& \equiv \exp(f^{-1}_i(M_i(x_i,x_{\mathcal{N}(i)}) - M_i(x'_i,x_{\mathcal{N}(i)}),x_{\mathcal{N}(i)})) \\
& = \exp(\widehat{\Psi}(x_i,x_{-i}) - \widehat{\Psi}(x'_i,x_{-i})) \; 
\end{align*}
where the second equality follows from the definition of transformed potential games. The scheme is local by construction. Because $\Psi$ is a Gibbs potential, by the Hammersley-Clifford Theorem (Theorem~\ref{thm:hc}), there is a positive joint (global) MRF consistent with the conditional distributions induced by the scheme. Hence, running the Gibbs sampler will always converge, regardless of initial conditions. Because the playing procedure is equivalent to running the Gibbs sampler (Property~\ref{pr:gs}), the corresponding empirical distribution of play will always converge to the same Gibbs distribution with potential $\Psi$. Thus, the procedure will be consistent.

In the special case of a $w$-weighted Gibbs-potential game, then each player $i$'s scheme takes the form
\[
p_i(x_i \mid x_{\mathcal{N}(i)}) = \frac{\exp(M_i(x_i,x_{\mathcal{N}(i)})/w_i)}{\sum_{x'_i} \exp(M_i(x'_i,x_{\mathcal{N}(i)})/w_i)}
\]
which corresponds to an SBR with a (unconditionally) linear transform with weight $1/w_i$.
%\qed

%\small

\bibliographystyle{plainnat}
\bibliography{games}

\end{document}